\newcommand {\dsty}{\displaystyle}
\newcommand{\0}{{\bm 0}}
\newcommand{\1}{{\bf 1}}
\newcommand{\A}{\bm{A}}
\newcommand{\D}{\bm{D}}
\newcommand{\Ex}{{\sf E}}
\newcommand{\I}{\bm{I}}
\newcommand{\M}{\bm{M}}
\newcommand{\N}{\bm{N}}
\renewcommand{\P}{\bm{P}}
\newcommand{\Realpart}[1]{\mathrm{Re}({#1})}
\newcommand{\Tc}{{\cal T}}
\newcommand{\T}{\bm{T}}
\newcommand{\Uc}{{\cal U}}
\newcommand{\an}[1]{\begin{align}#1\end{align}}
\newcommand{\ab}[1]{\begin{align*}#1\end{align*}}
\newcommand{\betav}{\bms{\beta}}
\newcommand{\Quote}[1]{\begin{quote}#1\end{quote}}
\newcommand{\bms}[1]{{\boldsymbol{#1}}}
\renewcommand{\bm}[1]{{\boldsymbol #1}}
\newcommand{\df}[2]{\displaystyle \frac{\mbox{\rm d} #1}{\mbox{\rm d} #2}}
\newcommand{\diag}[1]{\mbox{ \bf diag}\matrx{#1}}
\newcommand{\dspfrac}[2]{\frac{\displaystyle #1}{\displaystyle #2} }
\newcommand{\enumlist}[1]{\begin{enumerate}#1\end{enumerate}}
\newcommand{\eqdef}{:=}
\newcommand{\ev}{\bm{e}}
\newcommand{\evt}{{\,\ev^\tp}}
\newcommand{\e}{\bm{e}}
\newcommand{\oldversion}[1]{}
\newcommand{\goesto}{\rightarrow}
\newcommand{\hide}[1]{}	
\newcommand{\hidex}[1]{\ }	
\newcommand{\temphide}[1]{}	
\newcommand{\matrx}[1]{{\left[ \stackrel{}{#1}\right]}}
\newcommand{\muv}{\bms{\mu}}
\newcommand{\pr}{\protect}
\newcommand{\suchthat}{\colon}
\newcommand{\tp}{\top}	
\newcommand{\urlsmall}[1]{{\small\url{#1}}}
\newcommand{\vv}{\bm{v}}
\newcommand{\x}{{\bm{x}}}
\newfont{\gilfont}{cmsy10 scaled\magstep0}
\newcommand{\Naturals}{\mathbb{N}} 
\newcommand{\Nz}{\Naturals_0} 
\newcommand{\Reals}{\mathbb{R}} 
\newcommand{\CNS}{Cosmological Natural Selection}
\newtheorem{Theorem}{Theorem}
\newtheorem*{Implication}{Main Implication}
\newtheorem{Remark}{Remark}
\begin{document}
\title {\LARGE Implications of the Reduction Principle for \\ 
Cosmological Natural Selection
}

\author{Lee Altenberg}%
\affiliation{Associate Editor of \emph{BioSystems}; Ronin Institute; \sf{altenber@hawaii.edu}}

\begin{abstract}
Smolin (1992) proposed that the fine-tuning problem for parameters of the Standard Model might be accounted for by a Darwinian process of universe reproduction --- Cosmological Natural Selection (CNS) --- in which black holes give rise to offspring universes with slightly altered parameters.  The laws for variation and inheritance of the parameters are also subject to CNS if variation in transmission laws occurs.  This is the strategy introduced by Nei (1967) to understand genetic transmission, through the evolutionary theory of modifier genes, whose methodology is adopted here.  When mechanisms of variation themselves vary, they are subject to Feldman's (1972) evolutionary Reduction Principle that selection favors greater faithfulness of replication.  A theorem of Karlin (1982) allows one to generalize this principle  beyond biological genetics to the unknown inheritance laws that would operate in CNS.  The reduction principle for CNS is illustrated with a general multitype branching process model of universe creation containing competing inheritance laws.  The most faithful inheritance law dominates the ensemble of universes.  The Reduction Principle thus provides a mechanism to account for high fidelity of inheritance between universes.  Moreover, it reveals that natural selection in the presence of variation in inheritance mechanisms has two distinct objects: maximization of both fitness and faithful inheritance. Tradeoffs between fitness and faithfulness open the possibility that evolved fundamental parameters are compromises, and not local optima to maximize universe production, in which case their local non-optimality may point to their involvement in the universe inheritance mechanisms.
\ \\
\ \\
{\sc Keywords}: Resolvent positive operator, supercritical, Galton-Watson, $C_0$ semigroup, growth and mixing, Karlin's Theorem 5.2, multiverse.
\hide{particle physics -- cosmology connection,
cosmological perturbation theory,
cosmology of theories beyond the SM,
initial conditions and eternal universe}
\ \\
{\sc PACS numbers}:  98.80.Es, 96.10.+i, 06.20.Jr, 87.23.-n, 87.23.Kg, 97.60.Lf
\end{abstract}
\maketitle
\flushbottom

\section{Smolin's \CNS\ Hypothesis}

\citet{Smolin:1992} has proposed that the free parameters of the Standard Model --- and of the laws of nature, more generally --- may be the result of a process in which gravitational singularities cause new universes to form with slightly different laws of nature, and the entire branching ensemble of universes comes to be dominated by laws which produce the greatest number of gravitational singularities, through a process of 
`cosmological natural selection'.  The elements of the hypothesis are (slightly paraphrased from {\citep{Smolin:2006:Status}}):
\enumlist{
\item The world consists of an ensemble of universes.  
\item Black hole singularities bounce and evolve to initial states of expanding universes. 
\item Hence there is a fitness function equal to the average number of black holes produced by a universe initiated in such a bounce transition.
\item At each such creation event there is a small change in universe properties leading to a small random change to the average number of black holes produced.
\item Under very mild assumptions for the fitness function, the ensemble converges after many universe creation steps to one in which almost every member is near a local maximum in terms of universe production.
\item Therefore,  for a randomly chosen member of the ensemble, almost every small change in the parameters of the Standard Model from its present value either leads the production of black holes unchanged or leads to a decrease. Since our universe can be assumed to be randomly chosen we conclude that if the hypotheses made above are true, almost no change in the parameters of the Standard Model from the present values will increase the numbers of black holes produced.
}

I do not address here the plausibility of any of these elements with respect to the physics.  Critiques and responses in this regard are discussed in \cite{Rothman:and:Ellis:1993,Harrison:1995:Natural,Smolin:1995:Cosmology,Susskind:2004:Cosmic,Susskind:and:Smolin:2004,Silk:1997,Vilenkin:2006:Cosmic,Smolin:2006:Status,Smolin:2007:Scientific,Vilenkin:2007:Problems}.  I also do not address the unsettled issue of how probabilities over multiverse spaces should be interpreted in light of the datum of our universe.  Rather, I address the evolutionary dynamics.

In recent years, the theory of evolutionary dynamics has broadened considerably beyond the phenomena originally identified by Darwin ---  adaptation and diversification.  Augmentations of the 20th century ``Modern Synthesis'' of Darwinian natural selection with Mendelian genetics have grown into a body of theory extensive enough to be called an ``extended synthesis'' \citep{Pigliucci:and:Muller:2010}.  Among the earliest developments in this area was theory to account for the evolution of the genetic mechanisms themselves.

Smolin's proposal leaves mostly open the matter of universe `genetics' --- the mechanisms that might produce changes to the Standard Model parameters from universe creation events.  Some discussion is presented in   \citep{Smolin:2007:Scientific}: 
\Quote{The hypothesis that the parameters $p$ change by small random amounts
should be ultimately grounded in fundamental physics. We note that this is
compatible with string theory, in the sense that a great many string vacua
likely populate the space of low-energy parameters.  It is plausible that when
a region of the universe is squeezed to the Planck density and heated to the
Planck temperature, phase transitions may occur, leading to jumps from
one string vacua to another.  But so far there have been no detailed studies
of these processes which would have checked the hypothesis that the change
in each generation is small.  One study of a bouncing cosmology in quantum
gravity also lends support to the hypothesis that the parameters change in
each bounce \citep{Gambini:and:Pullin:2003}.
}

One need not wait for these issues to be resolved before investigating how universe inheritance laws would enter into the evolutionary dynamics.  Moreover, one can ask:  What if universe inheritance laws are themselves subject to evolution?  Here, another point made by Smolin \citep{Smolin:2012:Unification} is relevant:
\Quote{[T]he evolution of laws implies a breakdown of the distinction between law and state.  Another way to say this is that there is an enlarged notion of state---a meta-state which codes information needed to specify both an effective law and an effective state,
that the effective law acts on.
}
The `meta-state' to account for the evolution of universe inheritance laws would be information coding for the inheritance laws, which allows them to vary between universes and thus be subject to CNS.

\section{Modifier Theory for the Evolution of Inheritance Mechanisms}
Evolutionary theory in biology took just this approach in trying to understand the evolution of biological inheritance mechanisms when \citet{Nei:1967:Modification} posited the first \emph{modifier gene} model.  Mendel's laws of genetic transmission had hitherto been handled as exogenous parameters in the mathematical models of evolution of the Modern Synthesis.  There was no reason to believe, however, that the material basis for Mendel's laws was any different from the material basis for any other character of organisms.  The development of molecular biology confirmed that the mechanisms of inheritance --- recombination, mutation, DNA repair, etc. --- utilized the same objects as organismal traits involved with survival and fitness: proteins, nucleotides, regulatory sequences, gene interaction networks, and self-organizing structures and activities in the cell and organism.

Nei's modifier gene concept \cite{Nei:1967:Modification} explicitly ``enlarged notion of state'' with ``a meta-state which codes information needed to specify both an effective law and an effective state.''  The meta-state took the form of a gene posited to control the recombination rate between two other genes under natural selection.  

The first rigorous stability analysis of this model by \citet{Feldman:1972} showed that new forms of the modifier gene could grow in number if and only if they reduced the recombination rate when they occurred in populations near a stationary state.  The same Reduction Principle \citep{Feldman:Christiansen:and:Brooks:1980} was observed for modifiers of mutation rates and dispersal rates as well as recombination (see \cite{Karlin:and:McGregor:1972:Modifier,Teague:1977,Hastings:1983,Altenberg:1984,Feldman:and:Liberman:1986,Altenberg:and:Feldman:1987,Liberman:and:Feldman:1989} for a sample of the early studies).  

The Reduction Principle proposes that populations at a balance between transmission and natural selection will evolve to reduce the rate of error in transmission.  The fundamental implication is that evolution has two different properties that it operates to maximize:  organismal fitness, and the faithfulness of transmission.

The mathematical basis of the Reduction Principle was discovered by \citet{Karlin:1982} (although he did not realize it) in a fundamental theorem on the interaction between \emph{growth} and \emph{mixing}:  
in a system of objects that (1) are changed from one state to another by some transformation processes, and (2) grow or decay in number depending on their state, then \emph{greater mixing produces slower growth or faster decay}.  To be precise, Karlin's theorem states:

\begin{Theorem} [\pr{\citet[Theorem 5.2]{Karlin:1982}}]\label{Theorem:5.2}
Let $\M$ be an irreducible stochastic matrix, and $\D$ a diagonal matrix with positive diagonal elements.  If $\D \neq c \, \I$ for any $c \in \Reals$, then the spectral radius $r( [(1-\alpha) \I + \alpha \M) \D)$ is strictly decreasing in $\alpha \in [0, 1]$.
\end{Theorem}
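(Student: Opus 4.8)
The plan is to reduce the statement to the sign of the logarithmic derivative of the Perron root and then to settle that sign by a convexity argument. Write $\M_\alpha := (1-\alpha)\I + \alpha\M$. For $\alpha \in (0,1)$ this matrix is stochastic, irreducible (for $i\ne j$ its off-diagonal entries are $\alpha M_{ij}$, so it has the same nonzero pattern off the diagonal as $\M$), and has strictly positive diagonal, hence is primitive; therefore $\M_\alpha\D$ is a primitive nonnegative matrix and its spectral radius $r(\alpha) := r(\M_\alpha\D)$ is a simple, real-analytic eigenvalue with strictly positive right and left Perron eigenvectors $\vv = \vv(\alpha)$ and $\uv = \uv(\alpha)$ (a row vector), which I normalize by $\uv\vv = 1$. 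Since the spectral radius is continuous in the matrix entries and $\alpha\mapsto\M_\alpha\D$ is continuous, it suffices to prove $r'(\alpha) < 0$ for every $\alpha\in(0,1)$; strict monotonicity on the closed interval then follows by continuity (the only place where simplicity, hence differentiability, can fail is $\alpha=0$, where $\M_0\D = \D$).

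First I would compute $r'(\alpha)$ from the first-order eigenvalue perturbation formula $r'(\alpha) = \uv\,(\M-\I)\D\,\vv$. From $\M_\alpha\D\vv = r\vv$, i.e.\ $(1-\alpha)\D\vv + \alpha\M\D\vv = r\vv$, one solves $\M\D\vv = \tfrac{1}{\alpha}\bigl(r\vv - (1-\alpha)\D\vv\bigr)$ and substitutes, obtaining the clean identity
\[
  r'(\alpha) \;=\; \frac{1}{\alpha}\bigl(r(\alpha) - \uv\D\vv\bigr).
\]
So everything reduces to the inequality $r(\alpha) < \uv\D\vv$. Note that if $\D = c\I$ then $\uv\D\vv = c = r(\alpha)$, so this degenerates to an equality; this is exactly the excluded case, and it shows the strictness must be extracted from $\D$ being non-scalar.

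The inequality $r \le \uv\D\vv$ I would get by viewing the Perron root as a function of the logarithms of the diagonal entries: set $g(\bm{t}) := \log r\bigl(\M_\alpha\operatorname{diag}(e^{t_1},\dots,e^{t_n})\bigr)$ for $\bm{t}\in\Reals^n$. Each entry $(\M_\alpha)_{ij}e^{t_j}$ is log-affine, hence log-convex, in $\bm{t}$, so by Kingman's theorem on the Perron root of a matrix with log-convex entries, $g$ is convex on $\Reals^n$; it is also differentiable (the zero pattern, hence primitivity, is unchanged under positive column scalings), and the perturbation computation above gives $\nabla g(\bm{t}) = \bm{\mu}$, where $\mu_i = u_iv_i$ is a probability vector because $\uv\vv = 1$. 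Convexity means $g$ lies above its tangent plane at $\bm{t}$; evaluating that inequality at $\0$ and using $g(\0) = \log r(\M_\alpha) = \log 1 = 0$ (a stochastic matrix has spectral radius $1$) yields $\log r(\alpha) \le \sum_i \mu_i \log d_i$. Jensen's inequality (strict concavity of $\log$, all $\mu_i>0$) then gives $\sum_i \mu_i\log d_i \le \log\sum_i \mu_i d_i = \log(\uv\D\vv)$, with equality only when all $d_i$ coincide. Hence $r(\alpha) < \uv\D\vv$ whenever $\D\neq c\I$, so $r'(\alpha) < 0$ on $(0,1)$ and $r(\alpha)$ is strictly decreasing on $[0,1]$.

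The main obstacle is precisely the inequality $r\le\uv\D\vv$: direct manipulation of the two eigenvector equations is circular (it only reproduces $\uv\M_\alpha\D\vv = r$), so one genuinely needs an external convexity input, for which I would use Kingman's log-convexity theorem as above; alternative routes to the same bound are a Gibbs (relative-entropy) inequality applied to the probability vector proportional to $u_j/d_j$ and its image under $\M_\alpha$, or the Donsker--Varadhan variational formula for $\log r$, but the Kingman argument is the most economical. Everything else --- primitivity of $\M_\alpha\D$ on $(0,1)$, simplicity of the Perron root so that the perturbation formula applies, and the continuity extension to the endpoints --- is routine.
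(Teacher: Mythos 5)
The paper does not actually prove this statement: Theorem~1 is imported verbatim from Karlin (1982) and used as a black box, so there is no internal proof to compare against. Your argument, however, is correct and complete, and it is worth noting which of the known proofs it reconstructs. Every step checks out: the reduction to the sign of $r'(\alpha)$ via the simple-eigenvalue perturbation formula, the identity $r'(\alpha)=\tfrac{1}{\alpha}\bigl(r(\alpha)-\uv\D\vv\bigr)$ (which correctly isolates the excluded case $\D=c\I$ as the equality case), the gradient computation $\nabla g(\bm{t})=(u_iv_i)_i$, the tangent-plane inequality from Kingman log-convexity giving $\log r(\alpha)\le\sum_i u_iv_i\log d_i$, strict Jensen using positivity of the Perron vectors and $\D\neq c\I$, and the continuity extension to $\alpha\in\{0,1\}$ where primitivity (and hence differentiability) may fail. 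This is not Karlin's original route --- he obtained Theorem 5.2 from a Donsker--Varadhan-type variational characterization of the principal eigenvalue --- but rather the convexity route of Altenberg (2009; 2012), which rests on exactly the Cohen--Kato convexity of the spectral bound in the diagonal that this paper invokes in its Discussion to explain the Banach-space generalization. The convexity route buys more: it generalizes to resolvent positive operators where the sup-inf characterization is harder to deploy, and it cleanly localizes the strictness in a single application of Jensen. One small presentational caveat: you should say explicitly that the gradient $\bm{\mu}$ in the tangent-plane inequality is evaluated at $\bm{t}=(\log d_1,\dots,\log d_n)$ (i.e., with $\uv,\vv$ the Perron vectors of $\M_\alpha\D$ itself, not of $\M_\alpha$); your computation does this correctly but the sentence could be misread.
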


This phenomenon is not restricted to finite state or discrete time models.  Karlin's theorem has now been extended to general resolvent positive operators on Banach spaces \citep{Altenberg:2012:Resolvent-Positive}.  This relationship between mixing and growth is therefore an extremely general mathematical property.

The implication for the theory of cosmological natural selection is that the parameters of the Standard Model may be optimized not simply for universe replication rate, but also for faithfulness of universe replication, and these two objectives may not coincide, but could potentially involve tradeoffs.  Therefore, any finding that parameters of the Standard Model are not optimal for the rate of universe replication, via the production of black holes in the case of Smolin's proposed mechanism, cannot be taken on face value as falsification of the Cosmological Natural Selection hypothesis.  A further examination of how such parameters may be constrained by the mechanism of universe inheritance, or how they may impact the faithfulness of universe replication, must be considered.

\begin{Implication}
Parameters in the Standard Model that are not local optima for the rate of production of black holes may be involved in the inheritance mechanisms of universe creation and subject to contravening selection for the preservation of  universe properties.
\end{Implication}

\section{Illustration with a Branching Process Model}

The modifier gene methodology may be applied to investigate the evolution of hypothesized universe inheritance mechanisms through addition of another degree of freedom to the universe state to allow for variation in the inheritance laws.

Variation in the transmission laws can be drawn from a very large choice of alternatives.  Variation in the \emph{magnitude} of changes in the free parameters of the Standard Model is a natural choice.  However, to model the consequences of variation in magnitude one has to include a model of the map from parameters to reproductive properties, a speculative and complicating exercise.  The other principle alternative is to vary the probabilities among a fixed set of changes in state.  The simplest variation in probabilities is to scale equally all of the transitions between states.  Such variation is employed here.

The process of universe creation through black hole creation is naturally modeled as a continuous time branching process with an infinite number of types and with aging to account for individual universe development.  The purpose here, however, is merely to illustrate how universe inheritance mechanisms can evolve, and for this it is sufficient to analyze a simplified branching process with discrete time, finite number of types, and no aging.  The multitype Galton-Watson process is defined as follows.  

First, some notation is introduced.  The whole numbers are represented as $\Nz \eqdef \{0, 1, 2, \ldots \}$.  Vectors are represented as $\x, \vv, \e$, etc.\ and matrices as $\N, \P, \D, \I$, etc..  Matrix elements are $N_{ij} \equiv [\N]_{ij}$.  The vector of ones is $\ev$, its transpose $\evt$, and $\ev_i$ is the vector with $1$ at position $i$ and $0$ elsewhere.  Let $\D_\x \eqdef \diag{x_i}$ represent a diagonal matrix with diagonal elements $x_i$.  A column stochastic matrix $\P$ satisfies $\P \geq \0$ and $\evt \P = \evt$.  An irreducible nonnegative matrix $\A$ has for every pair $(i,j)$ some $t\geq 1$ such that $[\A^t]_{ij} > 0$, and if it is aperiodic, $\A^t > \0$ for some $t\geq 1$.  Letting $\lambda_i(\A)$ be the eigenvalues of $\A$, the spectral radius is $r(\A) = \max_i |\lambda_i(\A)|$, and the spectral bound is $s(\A) = \max_i \Realpart{\lambda_i(\A)}$, where $\Realpart{x}$ is the real part of $x$.

Let $\x \in \Nz^n$ be the $n$-vector of the number of individuals of each of $n$ types.  The probability that a single individual of type $j$ produces a vector of offspring numbers $\x$ is $p(\x, j)$.  Starting with an initial population $\x(0)$ at time $t=0$, each individual produces an offspring vector $\x$ according to $p(\x, j)$, and these vectors are summed to create the population at the next time step, $\x(1)$.  The reproduction process then repeats.  This defines the branching process $\{\x(t) \suchthat t = 0, 1, 2, \ldots \}$.

 The expected number of offspring of type $i$ from parent $j$ is
\ab{
N_{ij} = \sum_{\x \in \Nz^n} x_i \, p(\x, j).
}
It is assumed that $N_{ij}$ exists for all $i,j = \{1, \ldots, n\}$.
The expectation for the process is $\Ex[\x(t) | \x(0)] = \N^t \x(0)$.

The following theorem gives the asymptotic behavior of a super-critical branching process.
\begin{Theorem}[\citet{Kesten:and:Stigum:1966:Limit} \pr{from \cite[Theorem 1, p. 192]{Athreya:and:Ney:1972}}\label{Theorem:KS}]\ 

Let $\x(t) \in \Naturals_0^n$, $t \in \Naturals_0$ be vector of the number of individuals $x_i$ of type $i$ in an n-type Galton-Watson branching process.  Assume the following:
\enumlist{
\item The process is nonsingular, i.e. it is not true that for every $j$ there is an $i$ such that $p(\ev_i, j) = 1$.
\item The process is supercritical, i.e. $r(\N) > 1$.
\item $\N$ is irreducible and aperiodic.
}
Let $\vv(\N)$ be the Perron vector of $\N$ (the eigenvector associated with the spectral bound of $\N$) normalized so $\evt\vv(\N) =1$.

Then (almost surely, a.s.)
\ab{
\lim_{t \goesto \infty} \frac{1}{r(\N)^t} \x_1(t) &= c \, \vv(\N),
}
where $c \geq 0$ is a random variable such that $\Pr[c > 0] > 0$ if and only if for each $i, j \in \{1, \ldots, n\}$
\an{\label{eq:L}
L_{ij} \eqdef \sum_{\x \in \Nz^n} x_i \log(x_i) \, p(\x, j) < \infty.
}
\end{Theorem}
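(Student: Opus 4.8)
The plan is to follow the classical martingale-plus-change-of-measure route. First, extract a nonnegative martingale from the Perron structure of $\N$: since $\N$ is nonnegative, irreducible and aperiodic with $r(\N)>1$, Perron--Frobenius supplies a strictly positive right eigenvector $\vv\equiv\vv(\N)$ with $\N\vv=r(\N)\vv$, $\evt\vv=1$, together with a strictly positive left eigenvector $\uv$ with $\uvt\N=r(\N)\uvt$, normalized so $\uvt\vv=1$. Put $\Fc_t\eqdef\sigma(\x(0),\ldots,\x(t))$ and $W_t\eqdef r(\N)^{-t}\,\uvt\x(t)$. The one-step branching identity $\Ex[\x(t+1)\mid\Fc_t]=\N\x(t)$ yields $\Ex[\uvt\x(t+1)\mid\Fc_t]=r(\N)\,\uvt\x(t)$, so $\{W_t,\Fc_t\}$ is a nonnegative martingale and $W_t\to W$ a.s.\ with $\Ex[W]\le\uvt\x(0)$. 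For $\x(0)=\ev_1$ one has $W_0=u_1>0$, and the random variable $c$ of the theorem will be exactly $W$; everything reduces to (i) upgrading $W_t\to W$ to the vector statement $r(\N)^{-t}\x(t)\to W\vv$, and (ii) determining when $\Pr[W>0]>0$.

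For (i), the matrix $r(\N)^{-1}\N$ is primitive with spectral radius $1$, a simple dominant eigenvalue $1$ with eigenspace $\Reals\vv$, and every other eigenvalue of modulus $<1$; this spectral gap forces the part of $r(\N)^{-t}\x(t)$ transverse to $\vv$ to vanish. I would first impose the second-moment condition $\sum_\x\|\x\|^2 p(\x,j)<\infty$, under which $W_t$ is an $L^2$-bounded martingale, and show $\Ex\|r(\N)^{-t}\x(t)-W_t\vv\|^2\to0$ by expanding with the one-step offspring covariances and summing the geometric series produced by the gap, so that $r(\N)^{-t}\x(t)\to W\vv$ in $L^2$; almost-sure convergence along the full sequence then follows from a Borel--Cantelli argument along a subsequence plus control of the intervening fluctuations. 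The general case is reduced to this one by truncating every offspring vector at a large threshold $K$: the truncated process has finite second moments, its Perron data converge to those of $\N$ as $K\to\infty$, and the corresponding limit differs from $W\vv$ by a vector whose $L^1$-norm is bounded by the tail sum $\sum_j\sum_{\|\x\|>K}\|\x\|\,p(\x,j)$, which tends to $0$ because each $N_{ij}$ is finite.

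For (ii), the crux, pass to a size-biased tree with a distinguished spine. On the space of marked trees carrying a distinguished infinite ray, define $Q$ by $dQ/dP\big|_{\Fc_t}=W_t/W_0$ (legitimate since $W_t/W_0$ is a nonnegative mean-one $P$-martingale). Under $Q$: the type sequence $(\xi_t)$ along the spine is the Markov chain with kernel $\widehat N_{ij}=u_i\,N_{ij}/(r(\N)\,u_j)$ (column-stochastic precisely because $\uvt\N=r(\N)\uvt$); a spine individual of type $j$ produces offspring vector $\x$ with probability $(\uvt\x)\,p(\x,j)/(r(\N)u_j)$ and then designates its next spine child of type $i$ with probability $u_i x_i/(\uvt\x)$; every off-spine individual reproduces under $p(\cdot,\cdot)$. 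The Radon--Nikodym lemma then gives the dichotomy: $W$ is nondegenerate (equivalently $\{W_t\}$ uniformly integrable, equivalently $Q\ll P$ on $\Fc_\infty$) iff $\liminf_t W_t<\infty$ $Q$-a.s., while $W=0$ $P$-a.s.\ iff $Q\perp P$ iff $W_t\to\infty$ $Q$-a.s. Under $Q$, $W_t$ is $r(\N)^{-t}$ times the spine individual's weight $u_{\xi_t}$ (geometrically negligible) plus a sum over earlier levels of the $\uvt$-weights of the ordinary subtrees budding off the spine, so $\{W_t\}$ is bounded or explodes according to whether $\sum_{s\ge1}r(\N)^{-s}\,\uvt\X_s$ converges, where $\X_s$ is the size-biased offspring vector of the spine individual at level $s-1$. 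Since the $\X_s$ are conditionally independent given $(\xi_s)$ and every type is visited (irreducibility of $\widehat N$, inherited from $\N$), a Borel--Cantelli argument resting on the Kesten--Stigum elementary estimate --- for a nonnegative random variable $Z$ and $\beta>1$, $\sum_{n\ge1}\Ex[Z\,\mathbf{1}\{Z>\beta^n\}]<\infty$ iff $\Ex[Z\log^+ Z]<\infty$ --- shows the sum is $Q$-a.s.\ finite iff $\sum_\x(\uvt\x)\log^+(\uvt\x)\,p(\x,j)<\infty$ for every $j$; since $\uv$ is strictly positive and bounded and $x_i\log x_i=x_i\log^+ x_i$ on $\Nz$, this is equivalent to $L_{ij}<\infty$ for all $i,j$. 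Hence $L_{ij}<\infty$ for all $i,j$ gives $\Pr[W>0]>0$, and the failure of even one $L_{ij}$ gives, via the converse half of the estimate and the second Borel--Cantelli lemma, $W\equiv0$.

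I expect step (ii) to be the real obstacle: making the multitype spine construction and its Radon--Nikodym bookkeeping precise --- the type-dependence of both the size-biased offspring law and the spine transition kernel $\widehat N$ is exactly what has to be reworked relative to the single-type argument --- and then carefully translating the Borel--Cantelli condition, averaged over the spine type sequence, into the stated pointwise finiteness of each $L_{ij}$. A secondary point, if one wants the sharper conclusion that $W>0$ a.s.\ on the non-extinction event rather than merely $\Pr[W>0]>0$, is a zero--one argument combining the branching decomposition at time one with irreducibility of $\N$.
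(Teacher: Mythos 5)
You should note at the outset that the paper contains no proof of this statement: Theorem~\ref{Theorem:KS} is imported verbatim by citation to \citet{Kesten:and:Stigum:1966:Limit} and \cite[Theorem 1, p.~192]{Athreya:and:Ney:1972}, and is then used as a black box in the proof of Theorem~\ref{Theorem:Main}. So there is no ``paper's approach'' to match; what you have done is reconstruct a proof from the literature, and your reconstruction is essentially sound. Your part (i) (the martingale $W_t = r(\N)^{-t}\uvt\x(t)$, the spectral gap of $r(\N)^{-1}\N$, second moments first, then truncation) is the classical Athreya--Ney route to the almost-sure vector convergence $r(\N)^{-t}\x(t)\to W\vv$, and your part (ii) (size-biased tree with a distinguished spine, $dQ/dP|_{\Fc_t}=W_t/W_0$, the spine kernel $\widehat N_{ij}=u_iN_{ij}/(r(\N)u_j)$, and the Borel--Cantelli reduction to the $Z\log^+Z$ estimate) is the Kurtz--Lyons--Pemantle--Peres conceptual proof of the $L\log L$ dichotomy for the multitype case. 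The two halves fit together correctly, and your observation that $\sum_\x(\uvt\x)\log^+(\uvt\x)\,p(\x,j)<\infty$ is equivalent to finiteness of all $L_{ij}$ (because $\uv$ is strictly positive on a finite type space) is right. The thinnest point is the truncation step in (i): bounding the discrepancy in $L^1$ by the tail sum $\sum_j\sum_{\|\x\|>K}\|\x\|\,p(\x,j)$ gives convergence in $L^1$ but does not by itself upgrade to almost-sure convergence of the untruncated process, which is exactly the delicate part of the classical argument; you would need a monotone coupling of the truncated and untruncated trees (truncation only removes offspring, so the truncated process is a.s.\ dominated) together with a diagonal argument in $K$. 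Your closing remark about the sharper statement ($W>0$ a.s.\ on non-extinction) is correct but unnecessary here, since the paper only uses $\Pr[c>0]>0$.
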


Let us now frame a multitype branching process for cosmological natural selection.  Let $\Uc$ represent the set of possible universe types, with a finite number $|\Uc|$ of possible types.  Time is censused in epochs $t = 0, 1, \ldots $.  From one epoch to the next, each universe produces offspring universes according to the multitype Galton-Watson process.  The persistence of a universe between time epochs is accounted for by considering it one of its own offspring.  

Some structure is now introduced for the expectation matrix $\N$.  We decompose $N_{ij}$ in terms of the total expected number of offspring and the distribution among that total.  Let 
\ab{
\beta_j = \sum_{i\in \Uc} N_{ij}
= \sum_{\x \in \Nz^n} \evt\x \, p(\x, j)
}
be the expected total number of offspring from type $j$.  This value $\beta_j$ represents the fitness of universe type $j$.

Let $T_{ij} \geq 0$ be the expected fraction type $i$ among the offspring of $j$, so for each $j$, $\sum_{i\in \Uc} T_{ij} = 1$.  Then $\N$ is decomposed into a fitness component and an offspring distribution component: $N_{ij} = T_{ij} \beta_j$.
Define the matrix $\T \eqdef [T_{ij}]_{i,j \in \Uc}$.  Then in matrix notation:
\ab{
\N = \T \D_\betav.
}
The offspring distribution $\T$ is further decomposed.  The expected proportion of offspring that differ from parent $j$ is given as $\mu_j$.  Thus $1-\mu_j$ is the expected fraction of type $j$'s offspring that are also type $j$, i.e. $T_{jj} = 1 -  \mu_j$.  The expected fraction of offspring of type $j$ that are type $i$ among the non-type-$j$ offspring is $P_{ij}$, hence $T_{ij} = \mu_j P_{ij}$ for $i \neq j$. 
This decomposition in matrix form is
\an{\label{eq:T}
\T = (\I -  \D_\muv) +  \P \D_\muv = \I + (\P - \I) \D_\muv.
}
where $\T$ and $\P$ are stochastic matrices.  It should be noted that many different probability measures $p(\x, j)$ can produce the same matrix of expected proportions $\T$.


Now let us include variation in the branching processes.  We enlarge notion of state $i \in \Uc$ to a meta-state $(a, i) \in \Uc' \eqdef \Tc \times \Uc$, where $\Tc$ is the set of possible transmission laws for $\Uc$.  A transmission law for the branching process is the probability measure $p_a(\x,j)$, $p_a \suchthat\Naturals_0^{|\Uc|} \times  \Uc \goesto [0,1]$, that defines the branching process.  An important simplifying assumption made here is that the transmission law states $a \in \Tc$ are themselves transmitted faithfully in offspring universes.  A model in which the transmission laws themselves transform in universe creation is necessary to a more complete theory, but is deferred to further study.  By making the transmission law faithfully replicated, we know that its evolution is due solely to its effects on the universe replication behavior. 

Further, variation in transmission law states $a \in \Tc$ is assumed for now to be \emph{neutral}, in that it leaves unchanged the vector of the mean number of offspring, $\betav$.  

Variation in inheritance laws will be encapsulated solely through a parameter $m_a \in [0,1]$ ($a \in \Tc$) that scales equally the expected proportion of offspring universes that differ from their parents' type.  In the population genetics literature this is referred to as \emph{linear variation} \citep{Altenberg:2009:Linear}.  In matrix form, the scaling by $m_a$ enters into the expected proportion matrix $\T_a$, $a \in \Tc$, from \eqref{eq:T} as
\ab{
\T_a =  (\I -  m_a \D_\muv) +  m_a \P \D_\muv = \I + m_a (\P - \I) \D_\muv.
}

Thus the matrix of the expected numbers of offspring for universes within the transmission class $a \in \Tc$ is
\an{\label{eq:Na}
\N_a = [ \I + m_a (\P - \I) \D_\muv] \D_\betav .
}
\begin{Remark}\rm
A model in which there is variation in the \emph{magnitude} of changes between parent and offspring universes would also be manifest as changes in $\N$, but they would not be linear variation.  For example, let changes in a universe parameter $\gamma_j$ follow a Gaussian distribution (a case considered by \citet{Vilenkin:2007:Problems}), parameterized by $m_a$, to give expected distribution of offspring $i$ having parameter $\gamma_i$,
\an{\label{eq:Gaussian}
T_{ij}(m_a)  = \dspfrac{e^{\dsty -{(\gamma_i-\gamma_j)^2}/{m_a^2}}}{ \sum_{i \in \Uc} \ e^{\dsty -{(\gamma_i-\gamma_j)^2}/{m_a^2}}} .
}
In linear variation, $m_a$ scales all $T_{ij}(m_a)$ equally for $i\neq j$, and disappears from the derivative of $T_{ij}(m_a)$, but clearly this is not the case in \eqref{eq:Gaussian}.  The analysis of $r(\T(m_a) \D_\beta)$ under such non-linear variation is very much an open problem.
\end{Remark}

Now let us examine the evolution of the multiverse ensemble
\ab{
\x \eqdef (\x_1, \x_2, \ldots, \x_{\nu}) \in \Naturals_0^{\nu \times |\Uc|}
}
where there are $\nu$ different transmission laws in the ensemble, where each transmission law $a \in \Tc$ is carried by sub-population $\x_a \in \Naturals_0^{|\Uc|}$.  The matrix of expected numbers for the entire ensemble is a direct sum of diagonal blocks $\N_a$:
\ab{
\N \eqdef \N_1 \oplus \N_2 \oplus \cdots \oplus \N_{\nu} .
}
The diagonal block structure is the result of the transmission laws themselves being faithfully transmitted.

With framework defined, we have the following result.  
\begin{Theorem}[Domination of the Most Conservative Transmission Law]\label{Theorem:Main}\ 
\enumlist{
\item Consider an ensemble of universes reproducing according to a branching process, where transmission law $a \in \Tc$ determines the transmission law parameter $m_a \in (0, 1]$ which scales the expected proportion of universes $i \in \Uc$ that differ from their parent universe types $j \in \Uc$.  

\item Let the matrix of expected numbers of offspring universes of type $i$ from a universe of type $j$ be
\ab{
\N_a = [ \I + m_a (\P - \I) \D_\muv] \D_\betav , \text{ for each } a \in \Tc,
}
where $\muv$ is the vector of the fractions of offspring that are not identical to their parents, $m_a$ scales $\muv$, $\betav > \0$ is the vector of expected numbers of offspring universes for each type in $\Uc$, and $\P$ is the matrix of expected distributions of offspring type $i$ from parent type $j \neq i$, $P_{ii} =0$.  

\item In addition make the technical assumption that for each $a\in \Tc$ using $p_a(\x,j)$ in \eqref{eq:L} that $L_{ij} < \infty$ for all $i, j \in \Uc$.

\item Assume that the ensemble branching process is supercritical, so the spectral radius is
$r(\N) = \max_{a \in \Tc} r(\N_a) > 1.$

\item Let stochastic matrix $\P$ be irreducible and aperiodic, and let some $\beta_i \neq \beta_j$.  Let the transmission parameters be ordered so that $m_1 < m_2 < \cdots m_{\nu}$.  
}

Then 
\ab{
\lim_{t \goesto \infty} \frac{1}{r(\N)^t} &(\x_1(t), \x_2(t), \ldots, \x_{\nu}(t)) \\
=
&\ (c_1 \, \vv(\N_1), \0, \ldots, \0) \qquad \text{a.s.},
}
where $c_1$ is as in Theorem \ref {Theorem:KS}.

 Therefore, the ensemble of universes becomes dominated by the transmission law with the highest expected preservation of universe types in universe creation events.
\end{Theorem}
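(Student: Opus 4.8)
The plan is to get the eigenvalue ordering from Karlin's Theorem~\ref{Theorem:5.2} and the almost-sure limits from the Kesten--Stigum Theorem~\ref{Theorem:KS}, working one transmission-class block at a time. The decisive observation for the eigenvalue step is that each $\N_a$ already has the ``linear variation'' shape to which Theorem~\ref{Theorem:5.2} applies: writing $\T = \I + (\P-\I)\D_\muv$ for the offspring-proportion matrix of \eqref{eq:T} --- a column-stochastic matrix that is irreducible and aperiodic because $\P$ is and $\muv>\0$ --- one has
\ab{
\T_a \;=\; \I + m_a(\P-\I)\D_\muv \;=\; (1-m_a)\,\I + m_a\,\T ,\\
\text{hence}\qquad \N_a \;=\; \bigl[(1-m_a)\,\I + m_a\,\T\bigr]\,\D_\betav .
}
Since some $\beta_i\neq\beta_j$ forces $\D_\betav\neq c\,\I$, Theorem~\ref{Theorem:5.2} applied with $\M=\T$, $\D=\D_\betav$ and $\alpha=m_a$ says that $m\mapsto r\bigl([(1-m)\,\I+m\,\T]\,\D_\betav\bigr)$ is \emph{strictly} decreasing on $[0,1]$; as $0<m_1<\dots<m_\nu\le1$ this gives
\ab{
r(\N_1) \;>\; r(\N_2) \;>\; \cdots \;>\; r(\N_\nu),
}
so $r(\N)=\max_a r(\N_a)=r(\N_1)$, with the maximum attained \emph{only} at $a=1$.

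Next I would exploit the faithful transmission of the law: it is exactly what makes $\N$ block diagonal, so the subpopulations $\x_a(t)$ evolve as independent $|\Uc|$-type Galton--Watson processes with mean matrices $\N_a$. For the dominant block I would check the hypotheses of Theorem~\ref{Theorem:KS}. $\N_1=\T_1\D_\betav$ is irreducible and aperiodic (right multiplication by the positive diagonal $\D_\betav$ preserves the incidence pattern of $\T_1$, which inherits both properties from $\T$); it is supercritical since $r(\N_1)=r(\N)>1$; it is nonsingular, since a singular process would make $\N_1$ a $\{0,1\}$ matrix with one $1$ in each column --- hence of spectral radius $\le1$ --- contradicting $r(\N_1)>1$; and $L^{(1)}_{ij}<\infty$ in the sense of \eqref{eq:L} is the technical assumption~(iii). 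Theorem~\ref{Theorem:KS} then gives
\ab{
\lim_{t\to\infty}\,r(\N)^{-t}\,\x_1(t)\;=\;c_1\,\vv(\N_1)\qquad\text{a.s.,}
}
with $\Pr[c_1>0]>0$.

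It remains to show that every lower block is negligible on the scale $r(\N)^t$, and here Theorem~\ref{Theorem:KS} is not even needed. For $a\ge2$, $r(\N_a)<r(\N_1)=r(\N)$; fixing $\rho_a\in\bigl(r(\N_a),r(\N)\bigr)$ gives $\Ex\|\x_a(t)\|_1\le\|\N_a^{\,t}\|_1\,\|\x_a(0)\|_1=O(\rho_a^{\,t})$, so by Markov's inequality
\ab{
\sum_{t\ge0}\Pr\!\left[\,r(\N)^{-t}\,\|\x_a(t)\|_1>\varepsilon\,\right]
&\;\le\;\frac{1}{\varepsilon}\sum_{t\ge0}r(\N)^{-t}\,\Ex\|\x_a(t)\|_1\\
&\;=\;O\!\left(\sum_{t\ge0}\Bigl(\tfrac{\rho_a}{r(\N)}\Bigr)^{t}\right)\;<\;\infty
}
for every $\varepsilon>0$; Borel--Cantelli and $\varepsilon\downarrow0$ then give $r(\N)^{-t}\x_a(t)\to\0$ a.s. Intersecting the $\nu$ almost-sure events yields $r(\N)^{-t}(\x_1(t),\dots,\x_\nu(t))\to(c_1\vv(\N_1),\0,\dots,\0)$ a.s.; and since the smallest $m_a$ is exactly the one that maximizes each expected self-replication fraction $[\T_a]_{jj}=1-m_a\mu_j$, the single surviving block carries the most conservative transmission law, as claimed.

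I expect the real content to sit in the eigenvalue step: spotting the reparametrization $\T_a=(1-m_a)\I+m_a\T$ that casts the model as an instance of Karlin's growth-versus-mixing theorem, and verifying its hypotheses ($\T$ irreducible, $\D_\betav$ non-scalar) so that the \emph{strict} monotonicity --- hence a \emph{unique} maximizing block --- is available; without strictness one could only conclude domination by the set of Perron-root maximizers. The remaining branching-process bookkeeping is routine, the one spot needing a short argument rather than a citation being the nonsingularity of the dominant block, which I would deduce from supercriticality as above.
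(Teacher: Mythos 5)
Your proof is correct, and its core step---recognizing that $\N_a = [(1-m_a)\I + m_a\T]\D_\betav$ with $\T = \I + (\P-\I)\D_\muv$, so that Karlin's Theorem~\ref{Theorem:5.2} yields strict decrease of $r(\N_a)$ in $m_a$ because $\D_\betav \neq c\I$---is the same mechanism the paper uses; the paper merely invokes the equivalent derivative form $\tfrac{d}{dm}\,s(\D + m\A) < s(\A)$ from the resolvent-positive generalization, taking $\A = (\P-\I)\D_\muv\D_\betav$ and noting $s(\A)=0$ via the left eigenvector $\evt$, rather than the convex-combination form you use. Where you genuinely depart is in disposing of the subdominant blocks: the paper applies the Kesten--Stigum Theorem~\ref{Theorem:KS} to \emph{every} block $b$ and then multiplies by $(r(\N_b)/r(\N))^t \to 0$, whereas you apply it only to block $1$ and handle $b \geq 2$ with a first-moment bound, Markov's inequality, and Borel--Cantelli. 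Your route is in fact tighter: Theorem~\ref{Theorem:KS} as stated assumes supercriticality and nonsingularity, and nothing in the hypotheses guarantees $r(\N_b) > 1$ for $b \geq 2$, so the paper's blanket invocation of it for the subdominant blocks is not strictly licensed, while your moment argument needs nothing beyond $r(\N_b) < r(\N)$. Your explicit verification of the Theorem~\ref{Theorem:KS} hypotheses for the dominant block (aperiodicity of $\T_1\D_\betav$, nonsingularity deduced from supercriticality) is also something the paper omits. One shared caveat: both arguments need $\muv > \0$ for irreducibility of $\T$ (equivalently, of $(\P-\I)\D_\muv\D_\betav$), which the theorem statement leaves implicit; you name it as a premise, which is the right thing to do.
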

\begin{proof}
For the branching process for each transmission type $a \in \Tc$, Theorem \ref{Theorem:KS} gives
\ab{
\lim_{t \goesto \infty} \frac{1}{r(\N_a)^t} \x_a(t) &= c_a \, \vv(\N_a) \ \  \text{a.s..}
}
Suppose that ${r(\N_b)} < {r(\N_a)}$.  Then
\ab{
\lim_{t \goesto \infty} \frac{1}{r(\N_a)^t} \x_b(t) 
&=
\lim_{t \goesto \infty} \frac{r(\N_b)^t}{r(\N_a)^t} \frac{1}{r(\N_b)^t} \x_b(t)  \\
&= c_b \,  \vv(\N_b) \lim_{t \goesto \infty}  \frac{r(\N_b)^t}{r(\N_a)^t}= \0 \ \ \text{a.s..}
}
Thus we need to know when ${r(\N_b)} < {r(\N_a)}$.  This occurs when $m_b > m_a$, as shown by Theorem \ref{Theorem:5.2}.  An equivalent form from \cite{Altenberg:2012:Resolvent-Positive} is that
\an{\label{eq:Reduction}
\df{}{m} s(\D + m \A) < s(\A),
}
where $\A$ is irreducible and has nonnegative off-diagonal elements, and $s()$ refers to the spectral bound, which is the spectral radius if the matrix is nonnegative.  For the case here, $\A = (\P - \I) \D_\muv \D_\betav$, so $r(\A) = 0$ since $\evt (\P - \I) \D_\muv \D_\betav = (\evt - \evt) \D_\muv \D_\betav = 0
$, showing $\evt$ to the be left Perron vector of $\A$.  Thus $r([ \I + m_a (\P - \I) \D_\muv] \D_\betav)$ is strictly decreasing in $m_a$.

Therefore, if $m_a < m_b$ then $r(\N_a)  >  r(\N_b)$.  Hence $m_1 < m_2 < \ldots < m_\nu$ gives $r(\N) = r(\N_1) >  r(\N_2) > \ldots > r(\N_\nu)$.  Thus
\ab{
\lim_{t \goesto \infty}  \frac{r(\N_b)^t}{r(\N)^t} = 0 \qquad \text{for } b \in \{2, 3, \ldots, \nu\}.
}
This produces the claimed result that
\ab{
\lim_{t \goesto \infty}  \frac{1}{r(\N)^t} &(\x_1(t), \x_2(t), \ldots, \x_{\nu}(t)) \\
= \ 
&(c_1 \, \vv(\N_1), \0, \ldots, \0) \text{\ a.s.}.\qedhere
}
\end{proof}
\begin{Remark}\rm
It should be noted that no simplifying assumptions are made regarding the relationship between the expected number of progeny universes $\beta_j$ and the universe inheritance distributions $T_{ij}$.  No constraints are made on the $\beta_j$ other than they be positive, and that $\T$ be irreducible.  The result here is completely general in this regard.  
\end{Remark}

In this branching model, we see that the transmission law with the highest preservation of the laws of physics during universe creation becomes the only typical member of the ensemble as time goes on.  

\subsection{When Transmission Fidelity is in Conflict with Universe Production}

The variation in universe transmission laws is assumed to be neutral as framed Theorem \ref{Theorem:Main}: $m_a$ varies without varying the expected offspring production $\beta_i$.  However, nature need not produce such a clean separation of effects.  If the transmission type $a \in \Tc$ also affects the reproduction rates, we may give type $a$ its own vector of fitnesses $\betav_a$, and then the matrix of expected offspring numbers becomes
\an{\label{eq:Na2}
\N_a = [ \I + m_a (\P - \I) \D_\muv] \D_{\betav_a} .
}
It is clear that if $m_a < m_b$ then there is some region of $\betav_a$ values that maintains $r(\N_a) > r(\N_b)$, and this range includes values for which $\beta_{ai} < \beta_{bi}$ for each universe type $i \in \Uc$. 

Hence, universe type $(a,i)$ can become dominant over another type $(b,i)$ even though $\beta_{ai} < \beta_{bi}$, due to $m_a < m_b$.  

Therefore, to find a universe type $(b,i) \in \Tc \times \Uc$ with fitness greater than our own universe type $(a,i)$ does not necessarily falsify \CNS, but rather points to the involvement of property $a$ in universe transmission and suggests that transmission law $a$ is more conservative than transmission law $b$.

\section{Additional Limits to Optimization}

Another obvious caveat for predictions from \CNS\ is the assumption that all of the parameters of the Standard Model can be varied in any direction.  Since the origin the parameters is mysterious, and the  hypothesis that they vary as a result of universe reproduction is speculative, the possibility that the parameters are constrained to vary only in some subspace can't be ruled out, in which case optimality arguments have to be made with respect to variation within that subspace and not for a single variable.  Varying a single variable might appear to increase universe reproduction, but not be allowed within the space of variation.  The problem of constraints in variation has a long history in evolutionary theory \citep{Popov:2009:Problem}, as well as the general question of whether evolution optimizes organismal phenotypes \citep{Lewens:2009:Seven}.   

Even in the case where universe creation varies the fundamental parameters in all directions, an additional key point is raised by \citet{Vilenkin:2007:Problems}: evolution produces a distribution over the parameter space when it reaches a mutation-selection balance, and the intensity of mutation pressure compared to selection pressure determines the closeness of a typical universe to a fitness optimum.  Hence, falsifiable predictions cannot be made without some knowledge of the mutation process.  The question of how concentrated or spread out the distribution is around optimal types under mutation-selection balance is a principal issue in the subject of quasispecies theory \citep{McCaskill:1984:Localization,Tejero:Marin:and:Montero:2011:Relationship}.

\section{Discussion}
Theorem \ref{Theorem:Main} shows that the ensemble of branching universes comes to be dominated by universe transmission laws which are the most faithful in preserving the fundamental constants among offspring universes.  This is a manifestation of the Reduction Principle from population biology applied here in a new setting.  

The ubiquitous manifestation of this phenomenon is due to the fundamental mathematics of how mixing interacts with heterogeneous growth.  The most general characterization of this interaction is for operators on Banach spaces in \cite{Altenberg:2012:Resolvent-Positive}, that 
\ab{
\df{}{m} s(m A + V) \leq s(A),
}
where $A$ is resolvent positive operator on a Banach space, $V$ is a real-valued operator of multiplication, $s()$ is the spectral bound, and $m > 0$.  Resolvent positive operators $A$ include Schr\"odinger operators, second order elliptic operators, diffusions, integral operators, and in general, any generator of a positive strongly continuous semigroup.  This theorem derives from Kato's \citeyearpar{Kato:1982:Superconvexity} generalization to Banach spaces of Cohen's \citeyearpar{Cohen:1981:Convexity} theorem that for finite matrices $\A$, $s( \A + \D)$ is convex in diagonal matrices $\D$.

This dynamic has two principle implications for the \CNS\ hypothesis:
\enumlist{
\item  It gives a causal mechanism to incorporate universe transmission laws into the framework of cosmological natural selection, and predicts that transmission of the laws of physics between parent and offspring universes should be very conservative. 
\item It demonstrates that in addition to selection for the maximal number of offspring produced, selection is also indirectly minimizing the probability of change between parent and offspring.  Should there be a tradeoff between these two properties, then the observation of non-optimal values of Standard Model parameters with respect to black hole production may point to the involvement of those parameters in the universe transmission law.
}

The model analyzed here, while more general than a toy model, also leaves out of consideration how transmission laws may be involved in their own transmission.  There are also many other higher-order phenomena in evolutionary dynamics that may prove valuable to include in models of cosmological natural selection, including the evolution of mutational robustness, the evolution of evolvability, and the evolution of modularity in the genotype-phenotype map. 

While the \CNS\ hypothesis remains largely speculative, inclusion of higher-order phenomena  of evolutionary dynamics may provide additional guidance in deriving falsifiable predictions from the theory.

\acknowledgments
I thank Megan Loomis Powers for introducing me to A. Garrett Lisi, whose collaboration with Lee Smolin brought to my attention Smolin's \CNS\ hypothesis.  I thank Laura Marie Herrmann for assistance with the literature search.  I thank Martin Rees, George Ellis, Alexander Vilenkin, and Cl\'ement Vidal for their comments, and Cl\'ement Vidal and Brian D. Josephson for noting typos in an earlier draft.  

\end{document}